\begin{document}

\mainmatter  

\title{Retargeting Without Tracking}


\author{Minh-Dung Tran\and Gergely Acs\and Claude Castelluccia}
%

\institute{INRIA, France\\
\path|{minh-dung.tran, gergely.acs, claude.castelluccia}@inria.fr|\\
\url{http://www.inria.fr}}


\maketitle

\begin{abstract}
Retargeting ads are increasingly prevalent on the Internet as their effectiveness has been shown to outperform conventional targeted ads. Retargeting ads are not only based on users' interests, but also on their intents, i.e. commercial products users have shown interest in. Existing retargeting systems heavily rely on tracking, as retargeting companies need to know not only the websites a user has visited but also the exact products on these sites. They are therefore very intrusive, and privacy threatening. Furthermore, these schemes are still sub-optimal since tracking is partial, and they often deliver ads that are obsolete (because, for example, the targeted user has already bought the advertised product). 
\\\\
This paper presents the first privacy-preserving retargeting ads system. In the proposed scheme, the retargeting algorithm is distributed between the user and the advertiser such that no systematic tracking is necessary, more control and transparency is provided to users, but still a lot of targeting flexibility is provided to advertisers. We show that our scheme, that relies on homomorphic encryption, can be efficiently implemented and trivially solves many problems of existing schemes, such as frequency capping and ads freshness.
\end{abstract}

\section{Introduction}
\label{sec:intro}
In targeted advertising, companies track user online browsing activities to infer user information, such as age, gender and interests, and then personalize ads. Targeting helps advertisers to optimally allocate their advertising resources to their most likely potential customers, and thus to increase their revenue. As a result, companies have been constantly improving their tracking and ad personalizing technologies with the aim to enhance targeting performance. 

\textit{Retargeting} ads have been introduced in recent years with the aim to match the exact user attention or previous online action. For example, a user who has visited hotels.com looking for a hotel in Paris will very likely receive frequent ads about this hotel during his subsequent browsing sessions, for instance on accuweather.com. Advertisers (hotels.com in this case) aim to bring these customers back to their sites by showing ads related to the products they previously showed interest in. Retargeting ads have been shown to be significantly effective; Criteo in particular confirmed that personalized retargeting ads perform 6 times better than general ads \cite{criteo6x}. Increasingly, retargeting is becoming prevalent in travel, real estate and financial services industries \cite{nytimes_retargeting}.

Retargeting advertisers, mostly commercial online stores (e.g., hotels.com), often leverage a third party, called \textit{retargeter} (e.g., Criteo), to handle the retargeting task. Retargeters track users on these stores to collect products that they are interested in, and then select one to advertise to a user when (s)he visits an ad-enabled website. This is beneficial to advertisers as they can outsource the optimization of the whole advertising process  to an external party with dedicated resource and expertise. However, as users' interested products are centrally collected by third-party retargeters, this also poses significant privacy threats. For example, these products are shown to reveal users' important events in their life, such as being pregnant, getting divorced or graduating \cite{shopping_habits}. Compared to conventional tracking, where ad networks usually only collect urls of sites visited by a user (e.g., hotels.com) to infer his interest categories (e.g., traveling), retargeting trackers also retrieve \emph{exact products} on each page thereby inferring more accurate information about the user (e.g., the city where he is searching for a hotel).

There have been serious public concerns about the prevalence and resulting privacy threats of retargeting. As observed by the author of a The New York Times's article \cite{nytimes_retargeting}: ``Retargeting has reached a level of precision that is leaving consumers with the palpable feeling that they are being watched as they roam the virtual aisles of online stores." and ``It illustrates that there is a commercial surveillance system in place online that is sweeping in scope and raises privacy and civil liberties issues". 

The natural reaction from the user community is to block trackers. There are a bundle of tools for this purpose, such as AdBlockPlus \cite{adblockplus}, Ghostery \cite{ghostery}, DoNotTrackMe \cite{dntme}, TrackMeNot \cite{trackmenot} and much more. In addition, privacy advocates proposed Do-Not-Track \cite{dnt} initiative with the aim to help users notify trackers of their non-tracking preference and to urge trackers to stop tracking when receiving such notification. Unfortunately, these anti-tracking approaches prevent targeted advertising and compromise the business model of the Internet, which is mainly fostered by advertising revenue. The prevalence of these initiatives potentially leads to a situation where no one benefits. For example, advertisers cannot effectively promote their products and consequently lose sale revenue, content providers lose advertising revenue and may stop providing free content, and finally users do not receive useful ads and may have to pay for the access to content which is currently free. 

Alternatively, several research proposals \cite{privad}\cite{adnostic} aimed to shift advertisers' ad personalizing algorithms to users' devices, thus providing users with complete control over their data. However, it is unclear in these approaches  how the confidentiality of these algorithms is guaranteed against users. In addition, these proposals do not consider the \textit{Real-Time Bidding} (RTB) protocol, which allows trading ad spaces at real-time auctions on a per-ad-impression basis, in their design. RTB is actually a major channel for retargeters to buy ad spaces.

\paragraph{Contribution:} We propose the first retargeting system which does not require tracking. In this system, a client software operating at the user's terminal collects products on websites that the user has visited, and stores them locally. The product selection algorithm is distributed between the client and the retargeter, while an effective homomorphic scheme protects the algorithm's confidentiality from the client. Specifically, the homomorphic encryption scheme allows the client to perform some precomputation for the retargeter without learning how exactly the retargeter selects the advertised products; these precomputation results are then sent to the retargeter in a way that does not leak any user private information (e.g., IP address); and finally, the retargeter selects the final products to advertise to the user. Our scheme is compatible with today's advertising systems, and integrates RTB as part of its design.

The proposed scheme has several benefits over the existing retargeting scheme. 
(1) It improves {\em user privacy} by preventing systematic tracking.
(2) It provides {\em more transparency and control} for users over their data which is used for retargeting. In particular, 
users can filter out privacy-sensitive products from advertising.
(3) It is {\em more efficient} than existing schemes since profiling is performed locally, and is therefore based on higher quality data. 
Furthermore, in existing systems, users frequently receive retargeting ads about products that are not relevant anymore (e.g., they already bought them from another seller). This is annoying to most users, yet inevitable, since advertisers do not always know whether users have changed their intents or made a purchase.  
This problem is trivially solved in our scheme since users can filter-out products that they are no longer interested in. 
(4) It does not rely on cookie and tracking technologies, and therefore works even if users use anti-tracking tools.

\paragraph{Organization:} The paper is organized as follows. 
Section \ref{sec:background} presents an overview of the current retargeting system and related privacy risks. We then describe the goals and security assumptions of the proposed scheme in Section \ref{sec:goal}. We give an overview of the scheme in Section 
\ref{sec:overview}, and clarify the details in Section \ref{sec:details}. A privacy analysis is presented in Section \ref{sec:analysis}. The implementation and performance evaluation are presented in Section \ref{sec:exp}. We discuss possible improvements in Section \ref{sec:discuss}, survey related work in Section \ref{sec:related} and conclude in Section \ref{sec:conclusion}. 
\section{Background}
\label{sec:background}
Before retargeting became prevalent, conventional targeted ad systems (e.g., Google Adsense) were often only interest-based: ad networks collected urls of sites visited by users (e.g., hotels.com) to infer user interest categories (e.g., traveling) and used this information to deliver personalized ads. 

By contrast, \emph{retargeting} is much more effective (and also privacy-invasive): retargeting trackers do not only attempt to identify user interests, but also aim to get the \emph{exact products} on each page, possibly with additional information such as related user actions (e.g., search) and the point where the user suspended the purchasing process. For instance, given a user visiting hotels.com, retargeters might learn that he is interested in a hotel in a certain district in Paris. In addition, they can also infer whether the user only looks at this hotel, or has a clear booking intent (the hotel is in a shopping cart), or already made a booking (the hotel is in a booking confirmation page).  

\subsection{Retargeting Mechanism}

There are five entities in a retargeting system: \textit{advertiser}, \textit{publisher}, \textit{ad exchange}, \textit{retargeter} and \textit{user}. Advertisers (e.g., hotels.fr) wish to promote their products by showing ads to users. Publishers (e.g., nytimes.com) develop web pages providing content to users and sell ad spaces on their pages to advertisers. Ad Exchanges (e.g., DoubleClick) connect ad buyers (advertisers or their representatives) and sellers (publishers) in real-time transactions. Retargeters (e.g., Criteo) provide retargeting service to advertisers. 

We illustrate how retargeting works with the following example. A user visits maty.com, searches for an engagement ring (\textit{product}), looks at its details and then leaves the website without a purchase. Later on, he visits accuweather.com and finds this engagement ring advertised to him on the page. In this scenario, the owner of maty.com (\textit{advertiser}) uses the retargeting service provided by a \textit{retargeter}, say Criteo, to retarget the user on \textit{publisher} pages (accuweather.com in this example). This retargeting process has two main phases: tracking (the retargeter tracks users on maty.com) and delivering ads (the retargeter delivers ads to users on accuweather.com). In what follows, we elaborate these two phases.

\paragraph{Tracking:} The advertiser puts the retargeter's tracking code on its pages and encodes in each page the ids of products aimed for retargeting (e.g., an engagement ring's id: ``ring123"). When a user visits such a page, the tracking code sends his cookie (belonging to the retargeter's domain) along with these product ids to the retargeter. In addition, the advertiser can flexibly send additional user information to the retargeter. For example, if the user performs a search, the keyword and product ids on the resulting page could be sent to the retargeter. In case the user visits his shopping cart, the quantities of products in the cart could also be sent along with the product ids. This information serves a more accurate targeting (e.g., targeting shopping cart abandoners). As a result of tracking, the retargeter builds a list of potentially advertised products assigned to each user cookie.

\paragraph{Delivering ads:} The retargeter finds ad spaces on publisher pages (e.g., accuweather.com) mostly through the \emph{Real-Time Bidding} (RTB) protocol. RTB is provided by an Ad Exchange (ADX): publishers put the ADX's advertising code on their web pages; when a user visits such a page, the code sends an ad request, which contains the ADX's user cookie and the information about the page, to the ADX. The ADX subsequently broadcasts these data in form of bid requests to its registered bidders, including retargeters, for them to compete in a real-time auction for the ad space. 

Upon receiving a bid request, each bidder recognizes its own user cookie from the ADX's cookie thanks to a cookie matching protocol \cite{cm}. Given the list of products previously assigned to the user cookie, each retargeter selects some products\footnote{The retargeter can use the exact products that the user visited and/or suggest relevant products.} to be advertised to the user as well as the advertising price it would pay. This selection is typically based on the products, the user profile, and the page which displays ads.

ADX is often operated by giant firms, such as Google, Yahoo or Microsoft: they manage huge online ad inventories and put them into real-time auctions in order to maximize the revenue.

\subsection{Privacy Concerns of Retargeting}
\label{sec:privacy}

There are serious privacy concerns related to the fact that products, which users are interested in, are centrally collected by retargeters. Specifically, certain kinds of products might immediately expose sensitive information: an engagement ring reveals that the user likely intends to get married while its price implies his financial capacity, a hotel booking reveals the user's destination, a bank loan reveals the user's financial status, and so forth. In addition, research has shown that users' habits can be characterized by the products they consume for a period, and that changing habits anticipate special events in their life (e.g., being pregnant, getting divorced or graduating) \cite{shopping_habits}. The inferred information can be used by marketers to better suggest products to users, but can also enable price or service discrimination \cite{pricediscrimination}. Moreover, retargeting ads might reveal users' private actions, e.g., to their family members. This has been shown in a practical case when a commercial coupon advertised to a father revealed  that his teenage daughter is pregnant \cite{shopping_habits}. 
Furthermore, all this private information can be easily linked with user identity, as commercial websites often incentivize users to provide their name, email address or telephone number, e.g., through a fidelity program. 
\section{Goals and Assumptions}
\label{sec:goal}

\subsection{Goals}
We introduce a novel retargeting system which preserves \textit{user privacy} from retargeters. Specifically, our scheme ensures that retargeters cannot associate any user information with user personal identity such as IP address (\textit{anonymity}) and cannot associate multiple pieces of such information with the same user (\textit{unlinkability}). 
We also prevent any man-in-the-middle attack (possibly mounted by Ad Exchanges) between users and retargeters that could aim at eavesdropping the transmitted data (e.g., ads) to learn private information (e.g., list of products visited by the user). 

In addition, our scheme also protects the secrecy of retargeters which do not want to reveal every detail of their ad selection algorithm even if it is distributed between the user and the retargeter. This is a challenging task since the algorithm needs  private user attributes (e.g., age, sex, or interest categories) and also confidential data from the retargeter (e.g., the  combination of user attributes that yields higher clicking rate) as input. 

While ensuring privacy, we also aim to keep the retargeting \textit{effectiveness} of today's systems. Specifically, we provide retargeters with almost the same input and flexibility for ad selection as they have today. Nevertheless, we cannot compare the performance of our proposal with that of the current retargeting system due to the lack of real data and details of  today's ad selection algorithms. Hence, we leave this performance comparison for future work, and rather discuss several possible improvements of our scheme. 


\subsection{Security Assumptions}
\label{security}
Some related work \cite{privad}\cite{adnostic} assume honest-but-curious parties, which abide by the protocol rules but may misuse any information obtained in the protocol run. In our work, we assume that the retargeter and the ad exchange can be \emph{active}, i.e., they can also mount active attacks to break anonymity or unlinkability. 

However, we assume that the ad exchange does not collude with the retargeter. The major goal of an ad exchange is to provide a fair market for trading ads, not to break user privacy at all cost. Moreover, ad exchanges are often large companies with reputation (e.g., Google or Facebook), which unlikely take the risk to collude with an external party. In addition, most of their privacy policies actually contain non-collusion terms making them subject to legal action. Entities which do not have such privacy statement could be excluded by the client software by maintaining a blacklist of them.

On the client side, we assume that the user trusts the client software. The client software can be open-source (e.g., a browser plugin), and therefore can be easily audited by a trusted party. Finally, the client can be malicious towards retargeters. For example, a retargeter's competitor might manipulate the local configuration to learn the private algorithm of the retargeter. 

\section{System Overview}
\label{sec:overview}

\begin{figure}[t!]
\label{fig:sys_overview}
  \centering
  \includegraphics[width=\textwidth]{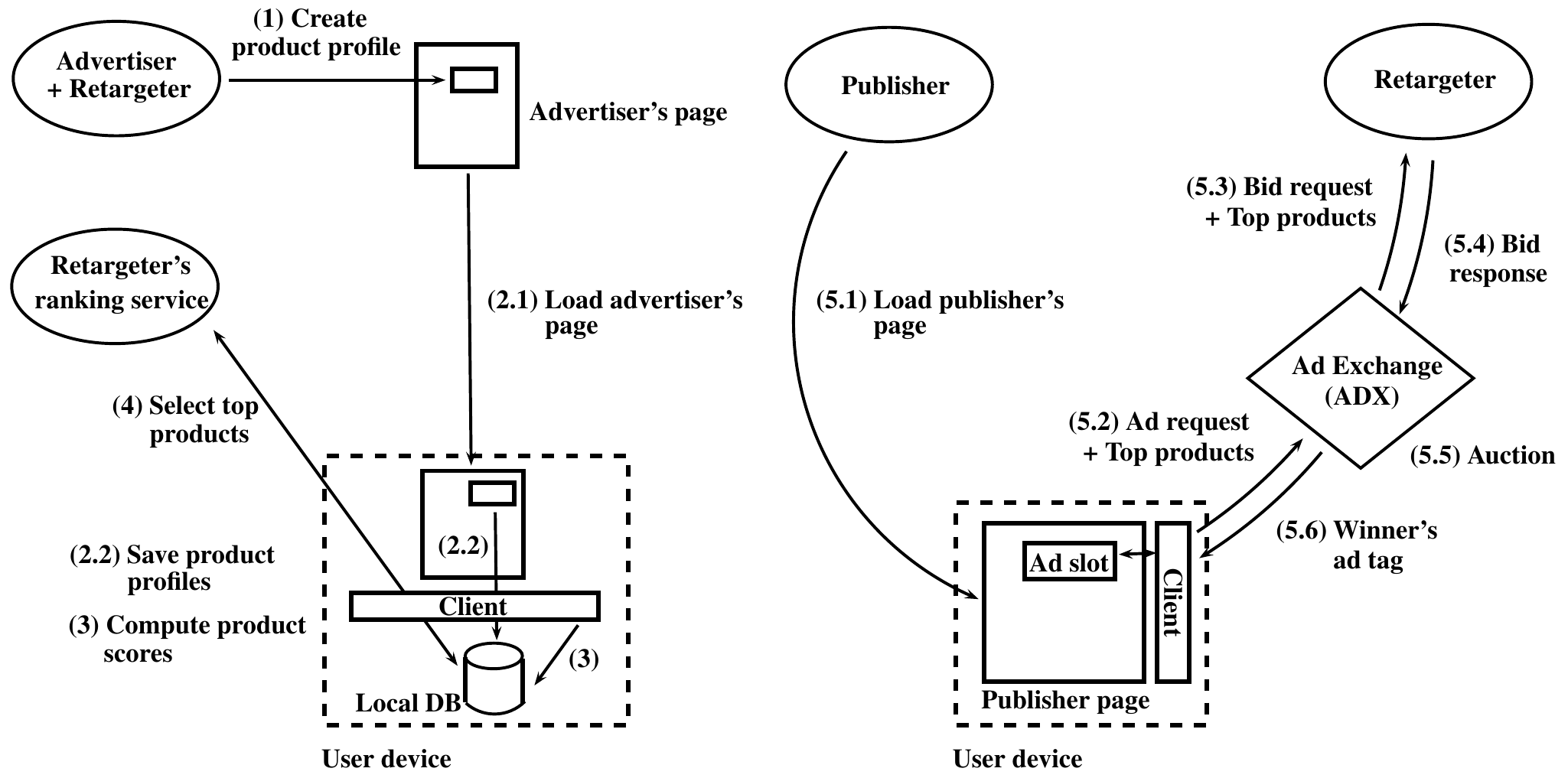}
  \caption{System overview}
  \label{fig:model}
\end{figure}

We follow a distributed (in contrast to the currently centralized) approach where a software agent running at the user's device, called \textit{client}, creates and maintains user profile, as well as computes a \emph{score} for each visited product. These scores allow retargeters to select ads and to adjust their advertising prices to the user. At a high level, the protocol works as follows (Figure \ref{fig:model})\footnote{Note that most values that are processed in the protocol are encrypted. We ignore this aspect at this stage for simplicity.}: 

\begin{enumerate}
\item The retargeting advertiser builds product feeds for the retargeter in which it specifies the products to be advertised. The advertiser and the retargeter agree on a range of advertising details for each product such as product quality (e.g., inferred from the number of users showing interest in the product), user targeting criteria and ad pricing\footnote{These details can be set by advertisers or suggested by retargeters.}. These details are encoded for each product as a \textit{product profile} and embedded into the advertiser's web pages.

\item When a user visits a commercial web page (which belongs to a retargeting advertiser, e.g. hotels.com) looking for a product, the client retrieves the product profile from the page. 

\item The client then computes the score of this product by matching the product profile with the user profile. This score gives an estimation of the expected revenue of the retargeter from advertising this product to the user, and therefore shows whether this product is a good candidate for retargeting or not. 

\item The client selects the products having the highest scores, called \textit{top-products}, for each retargeter that it encountered.

\item When the user visits a publisher's page (e.g., accuweather.com), which contains an advertising frame, he sends the top $m$ (say $m = 3$) product ids and scores of each retargeter to the ADX\footnote{We assume a mechanism such that retargeters notify users of which ADXs they are currently working with (e.g. by putting this information on a website accessible to clients). The client only includes products of the related retargeters when it detects an ADX.}. The ADX initiates the RTB auction among all retargeters by distributing the top-products of each retargeter. Based on the scores of the user's top products, each retargeter decides whether to serve an ad to the user and at what price, and sends the bid to the ADX. Finally, the ADX puts the winner's ad creative on the publisher's web page; the ad creative anonymously loads ads from the retargeter through a proxy mechanism at the ADX.  

\end{enumerate}
As product profiles are considered as commercially sensitive information, they are encrypted by retargeters using a homomorphic encryption scheme such that users cannot access the profile attributes. In order to select the highest scored products, the client invokes the \emph{ranking service} of each retargeter. In particular, the client sends the list of \textit{encrypted} scores to the ranking service in a way (described later) that leaks neither the list of products nor the user profile to the retargeter. The ranking service decrypts the scores, sorts them, and sends back the sorted list of encrypted scores to the user.

The ADX \textit{does not} include any user cookie (or user identifying information) into bid requests as in today's RTB protocol. Meanwhile, the top products sent to retargeters are encrypted and therefore inaccessible to the ADX. Retargeting ads are loaded through a proxy mechanism at the ADX and similarly are encrypted in order to prevent the ADX from learning their content. Note that our scheme requires that retargeters buy ad spaces from publishers only through ADX\footnote{We believe this requirement is reasonable: major retargeters (e.g. AdRoll, Criteo) are actually partners with major ad exchanges (e.g. DoubleClick, RightMedia, Facebooke, etc.) for their \textit{indirect buying} through RTB \cite{adrollpartner}\cite{criteodoubleclick}\cite{criteofacebook}, and start embracing Preferred Deals as an efficient technique for their \textit{direct buying} relationships with publishers \cite{directdeals}\cite{preferreddeals}\cite{adform}\cite{wpd}.}.
\section{System Details}
\label{sec:details}
\subsection{Product Score Evaluation}
\label{sec:prod_score}

We present our approach to compute product scores in the Cost-Per-Click (CPC) model, i.e., advertisers only pay retargeters when users click on ads. A similar approach can be applied to other models such as Cost-Per-Mile or Cost-Per-Action. In CPC, the retargeter's expected revenue from advertising a product $P$ to a user $U$ is calculated as $CTR_U(P)\times CPC(P)$, where $CTR_U(P)$ is the estimated Click-Through-Rate (CTR) of the ad shown to this user, and $CPC(P)$ is the price that the advertiser pays for an ad click. The product score in our scheme is defined as this expected revenue.

We assume that each product $P$ has a default $CTR$, denoted by $CTR(P)$, and computed by a retargeter $R$, e.g., based on the click history of similar products. A product initial score (PIS) of $P$ is calculated as $PIS = CTR(P)\times CPC(P)$, which is independent of users. $R$ targets users based on a set of \textit{user attributes} such as $\{\mathit{gender, age, interests, location}\}$. Example values of these attributes are $\{``male", ``24-35", ``sport", ``Paris"\}$. For each of these attribute values, such as $male$, $R$ quantifies its effect on the $CTR$ of $P$ by an \textit{impact factor}, such as $1.2$, indicating that advertising $P$ to a $\mathit{male}$ user would increase $P$'s $CTR$ by $20\%$. Impact factors can be learned from statistics, e.g., by analyzing the $CTR$ of similar products when being advertised to $\mathit{male}$. The impact factor which is larger/smaller than $1$ increases/decreases $P$'s $CTR$. 
If the retargeter does not have sufficient statistics to measure an impact factor, it sets that to a default value (e.g., $1$).

The retargeter $R$ configures, for each product, the initial score $PIS$ and the impact factors for all possible values of each user attribute. These properties are then encrypted by $R$'s symmetric key using a homomorphic encryption scheme such that a user can select the encrypted impact factors corresponding to his profile (e.g., a $male$ user picks the encrypted impact factor for $male$) without knowing their values. The user performs this selection and then computes the product score using the (encrypted) $PIS$ and the selected (encrypted) impact factors. In the following, we describe the details of our approach. 

\subsubsection{Encoding User and Product Profiles.}
The user attributes in our scheme include \textit{age}, \textit{gender}, \textit{location} and \textit{interest categories}. In addition, they can also include three extra attributes that are different for each product: \textit{user conversion status} (e.g., the product was put into shopping cart but not purchased), \textit{frequency of visits} (e.g., visits per day) and \textit{time of last visit} (e.g., last hour or last day)\footnote{Users who put the product into shopping cart are more likely to make a purchase than those only looking at the product. Larger visit frequency usually anticipates a purchase. Similarly, recently visited products are more likely to be purchased.}. Note that this list is not exhaustive: additional attributes can be added depending on specific targeting purposes. 

\begin{itemize}
\item \textbf{User profile:} A user profile is described by a vector $U$ where each coordinate encodes the value of one user attribute. Specifically, each coordinate is the index of the value of the corresponding attribute in its value set. An example of a user profile is $U = (1, 0, 89, 1, 2, 2, 3)$ where each coordinate may encode 
$(1: ``18-24"(age), 0 : \mathit{``male"}(gender), 89 : \mathit{``Madrid"}(location), 1 : \mathit{``computer\ science"}(interest), 2: \mathit{``in\ shopping\ cart"}(conversion\ status), 2 : \mathit{``10\ visits/day"}(frequency), 3 : \mathit{``last\ week"}(last\ visit))$.
\item \textbf{Product profile:} A product profile belonging to a retargeter $R$ includes (1) the ids of $P$ and $R$ ($id_P$ and $id_R$, resp.); (2) $PIS$; (3) a url of $R$'s ranking service (see later); (4) a set of vectors $\{F_1, ..., F_n\}$, where each corresponds to a user attribute and contains impact factors for all possible values of that attribute. Given the previous example, $F_2$ corresponds to $gender$, while $F_2[1]$ and $F_2[2]$ are the impact factors of $male$ and $female$, respectively. \footnote{Note that all values of the impact factors are converted to integer by using the micro format  (e.g., $1$ is converted to $1,000,000$ micros). The micro format is commonly used in the advertising industry.}
\end{itemize}

\subsubsection{Computing Score.} 
For each attribute $i$, the client uses $U[i]$ to pick one impact factor from $F_i$, namely $F_i[U[i]]$. The score $\mathit{S_P^R}$ for product $P$, which belongs to a retargeter $R$, is computed as follows:

\begin{equation}
\label{eq:score}
S_P^R= \mathit{PIS_P^R} \times \prod\limits_{i=1}^{n} F_i[U[i]]
\end{equation}

Although this formula can be computed using a \textit{multiplicative} homomorphic encryption scheme\footnote{An encryption scheme, denote by $\mathit{Enc}$, is additive homomorphic if, given two arbitrary plaintexts $m_1$ and $m_2$, it allows computing $\mathit{Enc}(m_1 + m_2)$ from $\mathit{Enc}(m_1)$ and $\mathit{Enc}(m_2)$ without decrypting any of these values. Similarly, the scheme is multiplicative homomorphic if it allows computing $\mathit{Enc}(m_1*m_2)$ from the ciphertexts $\mathit{Enc}(m_1)$ and $\mathit{Enc}(m_2)$.}, such as El-Gammal \cite{ElGammal}, such asymmetric encryption is too costly in practice. 
We, instead, use an \textit{additive} homomorphic encryption scheme based on \textit{symmetric keys}, which is proposed in \cite{ccastel} (Appendix), for its efficiency. Hence, we convert Formula \ref{eq:score} to additive form by taking the logarithm of both sides and use the resulting formula with the additive scheme:

\begin{equation}
\label{eq:score_add}
\log S_P^R= \log \mathit{PIS_P^R} + \sum\limits_{i=1}^{n} \log F_i[U[i]]
\end{equation}

For all $i$, $R$ encrypts the logarithm of $F_i$ such that each coordinate of $F_i$ is encrypted with a different key.  Specifically, for any $F_i$, $R$ computes 
$F_i^E = (\mathit{Enc}_{k_{i,1}}(\log F_i[1]), \ldots, \mathit{Enc}_{k_{i,|F_i|}}(\log F_i[|F_i|]))$ 
where $k_{i,j} = \mathit{Hash}(id_P|K|i|j)$ and $K$ is the secret key of $R$.  In addition, $R$ computes $\mathit{Enc}_{k_{\mathit{PIS}}}(\log \mathit{PIS}_P^R)$, where $k_{\mathit{PIS}} = \mathit{Hash}(id_P|K|``PIS")$.

To compute the product score, the user simply applies Formula \ref{eq:score_add} but in the encrypted domain, i.e.,
\begin{equation}
\label{eq:score_enc}
\mathit{Enc}_{k_{\mathit{PIS}}}(\log \mathit{PIS}_P^R) + \sum_{i=1}^n F_i^E[U[i]] = \mathit{Enc}_{k_{\mathit{PIS}} + \sum_{i=1}^n k_{i,U[i]}}(\log S _P^R)
\end{equation}
where the equality follows from the homomorphic property of $\mathit{Enc}$ \cite{ccastel}.
The product score can be retrieved by taking the exponent of the decrypted value. 

\subsection{Product Ranking}
The client needs to rank products belonging to the related retargeter $R$ when it encounters a new product profile. Alternatively, it can perform the ranking periodically (e.g., hourly) in case new products are frequently recorded. Recall that the URL of the ranking service is included in each product profile of $R$.

To rank a set of products, the client sends the list of product scores, computed in Formula \ref{eq:score_enc}, to the ranking service of $R$. However, in order to decrypt this score, $R$ would need the set of keys $\sum_{i}k_{i,U[i]}$, which eventually reveals the user profile $U$. We instead follow a popular approach \cite{obliviad}\cite{singleround} and propose to implement the whole ranking procedure using a secure co-processor (SC) (e.g., IBM 4765 \cite{ibm4765}), which provides secure storage as well as trustworthy, programmable execution environment. The SC could be deployed at $R$, and users can verify the code being executed on the SC through a remote code attestation mechanism (e.g., \cite{scauthenticate}). As the SC is tamper-resistant, while the communication between the client and the SC is encrypted, no other parties (including $R$) will learn anything about user profiles and top-products.

The ranking algorithm is simple and can be public. In particular, $R$ installs $K$ and the public ranking procedure on the SC. Afterwards, the client establishes a secure connection to the SC (e.g., through TLS), and sends $U$, as well as $id_P$ with $Enc_{k_{\mathit{PIS}} + \sum_{i}k_{i,U[i]}}(\log S _P^R)$ for the related products to the SC. Then, the SC can decrypt product scores, rank these products, and send back the sorted list of product ids to the client. 

\subsection{Ad Serving}
The client sends ad requests, which include the top products' $ids$, $id_Rs$ (retargeter ids), $PIS$ and scores, to the ADX. Note that, as opposed to what in current RTB systems, ad requests in our scheme do not contain any cookie. The ADX uses $id_Rs$ to separate the top-products for each retargeter and includes them along with the user's visiting page in related bid requests sent to appropriate retargeters. Each retargeter $R$ decrypts the product scores and adjusts these scores according to the quality and relevance of the visiting page. If there is a difference between the $PIS$ of a product and the latest $PIS$ at the retargeter (e.g. due to a CPC change), it updates the product score accordingly. $R$ then selects the product with highest score and determine the bid price based on the score. Subsequently, $R$ builds an ad creative which contains the ad url to load ad for this product from its server, and then includes the creative and the bid price in a bid response which is sent to the ADX.

If $R$ wins the auction, its ad creative is sent to the user's device. Note that current ADXs mostly allow these creatives to load ads directly from the retargeter's server or from the ADX' storage which contains retargeters' pre-uploaded ads. Both approaches do not protect user privacy: while the former exposes user IP address to the retargeter, the latter reveals personalized ad content to the ADX. In our scheme, we protect user privacy by requiring that ads be encrypted by the retargeter and loaded through a proxy mechanism at the ADX. The resulting computational and bandwidth overhead is analyzed in Section \ref{sec:exp}.
 
In particular, the ADX replaces the ad url in $R$'s ad creative with a url pointing to the ADX which contains the original url as value of a HTTP parameter. At the user's device, the creative requests the ad from the ADX; the ADX retrieves the retargeter's ad url, loads the ad from the retargeter, and then returns the ad to the user. The ad view or click report is sent to the ADX, which subsequently removes any user related information (e.g. IP address, browser info, etc.) and forwards the report to the retargeter.

\paragraph{Protecting Ad Requests and Content:} The product ids (in ad requests) or the ad content could reveal the user's top products to the ADX. In order to protect these data, we use a symmetric key which is shared between the retargeter and the client. In particular, before sending an ad request, the client generates a session key $K_{u,R}$  for each retargeter $R$ whose products are selected for the request. The client then encrypts each session key with the public key of the corresponding retargeter and includes the resulting encrypted key in the ad request. The ADX subsequently includes the encrypted keys in bid requests. $R$ first decrypts $K_{u,R}$ with its private key and then decrypts the top-products' ids with $K_{u,R}$. $K_{u,R}$ is also used by $R$ to encrypt the ad content for the client if $R$ wins the auction. 

\subsection{Other Features}
\paragraph{Frequency Capping:} Frequency capping restricts the number of times an ad is shown to the same user (e.g., less than 10 times per day). This is solved trivially in our scheme: the client counts the number of times a product is advertised to the user, and ignores a product when building ad requests if its counter is beyond a threshold. 
\paragraph{Click Fraud Defense:} Since the click report is anonymized (by the proxy mechanism at the ADX), 
this makes the click fraud more difficult to detect. However, a similar approach to \cite{privad} could be applied: the retargeter sends suspected click reports to the ADX; the ADX traces back to the user IPs which are responsible to these clicks to examine the probability of click fraud. 
\section{Privacy Analysis}
\label{sec:analysis}

In this section, we analyze how user privacy is protected from retargeters, ad exchanges, advertisers, and other users. We also analyze how the confidentiality of the retargeter's ad selection algorithm is guaranteed.

\subsection{Retargeter}
A retargeter $R$ might get user information from bid requests or ranking requests. Although $R$ gets top-products from bid requests, it cannot associate them with user identifying information such as the IP address (which is not included in bid requests). Contrarily, while $R$ can see the user IP from ranking requests, it is not able to get any related user data since these requests are received and processed by the SC through a secure connection. The tamper-proof property of SC prevents $R$ from intercepting and learning any internal data during the ranking process. In summary, $R$ \emph{unlinkably} gets users' top products (from bid requests) and users' IP addresses (from ranking requests). 

Without covert channel or collusion, $R$ cannot break \textit{user anonymity} unless it can correlate ranking requests with bid requests, e.g., if there are too few users, or ranking request time correlates with bid request time. Although both attacks seem impractical with large number of users (which is likely the case), the time correlation attack can be further mitigated by randomizing the time of ranking requests.

Nevertheless, $R$ might attempt to link users' top products from different bid requests and gradually build unique profiles of users (\textit{linkability}). This would be difficult given that the client sends only a small number of top-products in ad requests. A possible attack is to infer user attribute values from the product scores and then use them as a fingerprint to link different top products. This, however, can be mitigated by coarsening score values at ranking (performed inside the SC) and then using these coarsened values in ad requests. 

Although $R$ can hardly break user anonymity or profile unlinkability if it faithfully follows the protocol, it might collude with other parties in order to do that. Recall that we assume non-collusion between ADX and $R$ (Section \ref{sec:goal}). Other parties that might (and in fact have motivation to) collude with $R$ are the advertiser and the publisher.  


Assume that $R$ colludes with a malicious advertiser. For example, the advertiser might send all its log entry database (possibly containing visited products, time of visits, and  the user IP address of every visit) to $R$. In order to link bid requests, which likely contain products from other advertisers, with IP addresses from this database, $R$ may perform timing or product frequency analysis.

\begin{itemize}
\item \textit{Timing analysis:} $R$ correlates the visiting time related to a product (in the malicious advertiser's database) with the reception time of a bid request (e.g., if they are close, both events are possibly originated from the same IP).
However, as discussed previously, time-based correlation is difficult due to the large number of users. In addition, it is hard to predict the interval between a visit to an advertiser's page and a visit to a publisher's page. 

\item \textit{Product frequency analysis:} $R$ selects products from the database which were visited by a small number of users, ideally only by a single user (identified by IP address). If a bid request contains any of these products, $R$ can associate it with a small group of users, or a single user, accordingly. Though this attack is possible, it can only affect a very small proportion of users. In addition, in this form of attack, $R$ cannot actively target a victim.

\end{itemize}

The timing attack would be more practical if $R$ colludes with a malicious publisher, as the time of a visit to a publisher page is very close to the time of the resulting bid request. However, this only works in case of very small publishers which have few connecting users at a time. In addition, the client can add some arbitrary delays in sending bid requests in order to mitigate the risk of such attack.

\subsection{Ad Exchange}
Although the ADX can see the user's IP address, it cannot obtain the list of top-products or ad content; they are encrypted using the session key shared between the user and the retargeter. The ADX cannot forge fake session keys in order to decrypt retargeting ads served through its proxy mechanism. In particular, if a fake session key is produced, the retargeter cannot get the right product ids from the related request.

Notice that, as part of the RTB protocol, ADXs can still get urls of visited sites in our solution. Nevertheless, the risk of profiling users using these urls is less severe than that in retargeting (Section \ref{sec:goal}). Moreover, users can block cookies to prevent possible tracking performed by the ADX; our scheme works without the need of any tracking cookie. In this work, we focus on the privacy risks of retargeting, and leave the total elimination of this url leakage towards the ADX for our future work. 

\subsection{Advertiser}
In our system, an ad click brings the user directly to the advertiser's site, the same as what is happening today. We acknowledge that this is a problem since advertisers can leverage fine-grained targeting feature and high user profiling quality in our scheme to learn more information about users than what they can learn in today's system. One solution could be to handle post-click sessions through an anonymized network, such as TOR \cite{tor}. This might, however, lead to additional network latency, complexity in measuring ad performance or other implication, and therefore need deeper analysis from the research community and the advertising industry. 

\subsection{User}
Targeted ads were proved to be a potential source of leaking user private information due to its personalized content \cite{adsleak}. In case of retargeting ads, someone happens to look at a user's screen when he is browsing the web in a public place (e.g. at work) may infer his previous private actions (e.g. at home)\footnote{We assume that the user is using the same laptop computer in both environments.} that the user may want to keep secret (e.g. looking for an engagement ring). Since user profiles in our scheme are built and stored locally, users can trivially filter out sensitive products to prevent such unexpected information leakage. The encryption of transmitted data also prevents eavesdroppers from exploiting personalized ad content thereby inferring user private information.

The client can be malicious toward a retargeter $R$. For example, one of $R$'s competitors might install and manipulate a client in order to learn product profiles of $R$. Since the used encryption scheme is proved to be \textit{perfectly secure} \cite{ccastel} (Appendix), only $R$ can decrypt its product profiles. Nevertheless, a malicious client may attempt to manipulate its own user profile to observe possible changes in the ranking list and thereby inferring some properties of the product profiles. This kind of attack needs to be repeated many times in order to learn meaningful results. Consequently, it can be mitigated by applying a threshold on the number of ranking requests per client. 
Furthermore, the ranking service on the SC may apply a randomization in the order of top-products to make such kind of attacks more difficult, if not impossible.   
\section{Implementation and Evaluation}
\label{sec:exp}

\subsection{Implementation}
\label{sec:implement}
To build the client, we extend the Firefox plugin \textit{HttpFox} \cite{httpfox} and use SQLite for storing local data. The Ad Exchange and the Retargeter are written using NodeJS \cite{nodejs}. Note that the ranking service, which is supposed to be implemented on a SC, is implemented in NodeJS and executed by a normal processor in our implementation. We ran our client inside a Firefox browser on a laptop running OS X 10.7.2 on an Intel Core i5 2.4 GHz, and the retargeter and ad exchange on a machine with an Intel Core 2 Duo 2.66 GHz and running Ubuntu 11.04.

Based on Google Ads Settings \cite{googleadssettings}, we configure the system with 2 genders, 7 age ranges, 24 top interests and 846 word localities. The user conversion status, the frequency of visits, and the time of last visit are all configured with 5 permissive values. The client stores up to 1000 products (for all retargeters) and includes 3 products per retargeter in each ad request. 

\subsection{Evaluation}
We evaluate, in this section, the computational and bandwidth overhead of our scheme (compared to the existing system) through an example scenario.

\paragraph{Example scenario:}
A retargeter $R$ provides retargeting services for 100 advertisers, each having 1000 retargeting products. Each day there are 1 million unique users browsing these advertisers' websites (10K users per site on average). Each user browses for 20 retargeting products which belong to 3 retargeters on average, and issues 20 ad requests containing such products, per day. Ad requests are handled by a single Ad Exchange (ADX), resulting in 20 millions requests per day (about 232 requests per second) received by the ADX. For each ad request, the ADX sends bid requests to all retargeters whose products are contained in the ad request. The retargeter R submits bid responses for all 20 millions bid requests, wins about 10\% of these auctions (2 millions winning times). 


\subsubsection{Computational Overhead.} 
\label{sec:computation}
Our client can perform 5K homomorphic computations per second. It can generate 200 session keys (encrypted by $R$'s public key) and 30 ad encryptions per second. With a few dozen score computations and ad requests per day, the computation at client causes a negligible overhead.

Our retargeter can perform 133 product encryptions, 100K score decryptions, and 6K ad encryptions per second. The computational overhead of the retargeter (per day) is shown in Table \ref{retargeter_com_overhead}. In this table, (1)(the first line) is estimated in the worse case (the retargeter re-encrypts all 100K products everyday), while (3) can be significantly optimized by offloading asymmetric operations using dedicated hardware \cite{sslshader}. With our implementation, assuming that the retargeter rents computation resources from Amazon EC2 \cite{ec2} (e.g., using a c3.large instance which is optimized for computation purposes), the total computational overhead costs about \$0.45 per day.

\begin{table}[!t]
\renewcommand{\arraystretch}{1.3}
\caption{Computational overhead at retargeter (per day)}
\label{retargeter_com_overhead}
\centering
\begin{tabular}{|c|l|r|}
\hline
Index & Action & Time (hours)\\
\hline
1 & Encrypting 100K products on advertisers' websites & 0.21\\
2 & Decrypting 60M scores in bid requests & 0.83\\
3 & Decrypting 2M session keys & 1.23\\
4 & Encrypting 2M ads with session keys & 0.46\\
5 & Processing 20M ranking requests & 0.33\\
\hline
& Total & 3.06\\
\hline
\end{tabular}
\end{table}

\subsubsection{Bandwidth and Storage Overhead.}
\label{sec:bandwidth}
Each product profile increases the size of the product web page by 6 KB. This is a negligible overhead given the fact that, for example, a maty.com's product page's html source is around 100 KB, excluding images, css and javascript files. The sizes of an ad request and a ranking request are 23.94 KB and 15.96 KB, respectively. The total bandwidth overhead for each user is therefore approximately 2 MB per day. The products stored at a user's device (maximum 1K) cause a maximum 8MB local storage. 

\begin{table*}[!t]
\renewcommand{\arraystretch}{1.3}
\caption{Bandwidth overhead at ADX and retargeter (per day)}
\label{table_bwth_overhead}
\centering
\begin{tabular}{|c|l|c|l|c|}
\hline
 &\multicolumn{2}{|c}{\textbf{ADX}} &
\multicolumn{2}{|c|}{\textbf{Retargeter}} \\ 
\hline
Index & Action & Bwth & Action & Bwth \\
\hline
1 & Receiving 20M ad requests & 53GB & Receiving 20M bid requests & 18GB\\
2 & Sending 60M bid requests & 53GB & Receiving 20M ranking requests & 35GB\\
3 & Proxying 10M retargeting ads & 190GB &&\\
\hline
\end{tabular}
\end{table*}

The bandwidth overhead at the ADX and the retargeter are presented in Table \ref{table_bwth_overhead}. In order to provide an estimation of the resulting cost, we assume that both the retargeter and the ADX run their software on Amazon EC2's servers. Amazon EC2 only charges the bandwidth from EC2 to the Internet, which is related to ads served by the ADX to users. If we apply the upper bound of this price, namely \$0.12 per GB, the bandwidth overhead resulting from serving ads (190 GB) costs the ADX approximately \$22.8 per day (\$$2.28*10^{-6}$ per ad). This additional cost can be shared among retargeters, advertisers, and publishers, for example by the ADX slightly increasing the transaction commission. Given the significant number of advertisers and publishers working with an ADX in general, the cost per entity would become negligible.

\section{Discussion}
\label{sec:discuss}


\subsection{Compatibility}
Our purpose is not to replace but rather complement the current retargeting system by providing an alternative solution for retargeters (and ad exchanges) to provide retargeting ad service in a privacy-preserving manner. An example scenario would be as follows. The retargeters choose privacy-preserving (our scheme) or regular (current scheme) RTB mode at the ADX. At auction, the ADX sends privacy-preserving retargeting bid requests (as described in our scheme) to privacy-advocate retargeters, and regular bid requests to the others. The ADXs which support our scheme specify this in their ad requests (e.g., by using a special HTTP header, such as $``PPRetargeting=true"$). The client\footnote{Privacy-advocate retargeters and ADXs encourage users to use the client software.} intercepts and includes into these requests the respective user's top retargeting products. If users do not favor traditional retargeting or RTB, they can, for example, configure the web browser to disable third-party cookies.

\subsection{Scoring Algorithm}
The score computation (i.e., $CPC(P)\times CTR(P)\times \prod\limits_{i=1}^nF_i$), is based on an assumption that the effect of user attributes on the product's $CTR$ are independent of each other. Consequently, this algorithm has a limitation: it does not take into account the intrinsic correlation between user attributes. For example, say a user's age ``18-24" and interest ``sport", each increasing a product's $CTR$ with 1.2 (20\%) and 1.1 (10\%), respectively, a combination of them might not necessarily be equal to $1.2 \times 1.1 = 1.32$, but can be higher or lower than that depending on the correlation between the two attributes. In the following, we discuss a possible extension that takes into account this correlation.

\paragraph{Attribute Coefficients:} We assume that the correlation between two attributes $F_i$ (e.g., age) and $F_j$ (e.g., gender) can be quantified by a coefficient $C_{ij}$, so that their combined effect on a product's $CTR$ is computed as $F_i\times F_j \times C_{ij}$. Each $C_{ij}$ can be defined with different values for different value ranges of $F_i$ and $F_j$. The score in this case would be computed as: $CPC(P)\times CTR(P)\times \prod\limits_{i=1}^nF_i\times \prod\limits_{i,j}C_{ij}$. Note that the correlation coefficient can be computed for more than two attributes, e.g., $C_{ijk}$ or $C_{ijkl}$. In the worse case, the size of the coefficient set is equal to the number of all possible combinations of attribute values. The coefficient values are also encrypted by the retargeter, and their applied attribute ranges can be defined in a script encoded into the product profile. 

\subsection{Gathering Statistics}
User statistics (e.g., click behaviors of a group of users) are important for retargeters to enhance their retargeting performance. In our scheme, the SC can also be used to aggregate this information in a privacy-preserving manner. In particular, after an interval (e.g., a week), the client sends the user's profile and CTRs of local products to the SC (through a secure connection). The SC aggregates these data from a sufficient number of users, produces statistical results (e.g., average CTR of a product advertised to \textit{sport-enthusiastic} users), and sends them to the retargeter.

\section{Related Work}
\label{sec:related}
There have been several research proposals aimed at designing a privacy-preserving targeted advertising system. Nevertheless, none of them address privacy problems in retargeting or consider RTB in their design. In the following, we survey these proposals and analyze their differences from ours.

Saikat et al. proposed Privad \cite{privad}. In their design, a client software builds the user profile locally, a \textit{dealer} acts as an anonymizing proxy to mediate the communication between users and ad brokers, while the communication is protected from the dealer with the public keys of ad brokers. Privad uses a publish-subscribe mechanism for ad delivery: the ad broker transmits ads to users according to their subscribed coarse-grained information (e.g., generic interest category such as ``sport"), the ads are cached at the user's device, and the local software selects ads that best match the user profile when encountering an ad box. Since the client in Privad is implemented as an untrusted black box, a reference monitor is needed to gauge the traffic between the client and the network. 

The publish-subscribe and ad caching mechanism in Privad is not appropriate to real-time auctions in RTB, in which ads and bid prices are dynamically selected at real-time. Our approach is somewhat similar to Privad in terms of using a proxy to protect user anonymity. However, while Privad requires an additional party, the dealer, whose incentive and business model are not clear, we leverage an existing entity, the ad exchange, and its RTB protocol, which plays an important role in the current ad system. 

Moreover, Privad does not detail how to protect the ad selection algorithm of ad brokers from users. In addition, as the client is untrusted, it would be difficult (e.g., might require significant human intervention) to prevent the client from bypassing the reference monitor to leak user information to the network (e.g., through semantic means). Furthermore, in Privad, as a client software is provided by an ad broker, each user might have to install as many clients as the number of ad brokers. This requirement imposes an important barrier for the entrance of new ad brokers. In our proposal, users trust the client (which can be open-source) and therefore do not need a monitoring mechanism. We use homomorphic encryption to allow the client to perform part of the retargeter's ad selection algorithm while protecting its confidentiality. Each user installs only one software that can work with any retargeter which follows the protocol. 

Adnostic \cite{adnostic} similarly profiles users locally but uses a different approach for ad rendering. In Adnostic, the broker transmits a set of ads solely based on the page that a user is currently visiting, and the local software selects ads that best match the user profile to display to the user. Nevertheless, since only a small number of ads are transmitted by ad brokers without any knowledge about the user, the targeting performance in Adnostic is far from optimal. In addition, as the ad broker cannot see which ad is displayed to the user, it is difficult to control frequency capping. Finally, this model is not suitable to RTB which allows only one ad to be sent by a bidder in each auction. In our scheme, we ensure fine-grained targeting, the frequency capping is solved trivially, and RTB is supported. 

Obliviad \cite{obliviad} aims to shift the current algorithms of ad brokers into secure co-processors (SC). The user sends his profile in form of keywords to the SC through a secure connection (e.g., TLS). The SC subsequently selects ads from the broker's database using these keywords. Obliviad implements a PIR scheme over an ORAM structure to hide all the database access patterns from the broker. One similar characteristic between Obliviad and our proposal is the use of SC as a trusted environment on the advertiser side. However, while Obliviad aims to implement the whole broker's ad selection algorithm inside the SC, we only leverage the SC for a simple sorting algorithm. Note that, due to tamper-resistance requirements, the SC is often significantly constrained in both computation ability and memory capacity in comparison with host CPUs \cite{upir}. Our sorting algorithm is extremely simple (i.e. decrypt and sort numbers), and therefore is easy to be implemented inside the SC. In addition, similarly to Privad and Adnostic, it is unclear how to adapt Obliviad to RTB. For example, it is apparently impractical to make a TLS connection to the SC through a RTB auction.

RePriv \cite{repriv} proposes a client-side framework for content providers to inject their \textit{miners} (in form of embedded codes) to run on the user's terminal. These miners collect user information according to specific purposes of each provider, then customize the provided content accordingly. This approach can be applied in our scheme, for example in building and maintaining user profiles.
\section{Conclusion}
\label{sec:conclusion}
Retargeting ads are growingly rampant and cause great privacy concerns, mostly resulting from tracking user intents. In this work, we propose the first retargeting system that does not rely on tracking. Our scheme leverages homomorphic encryption to distribute the ad selection algorithm between the user and the retargeter in a way that securely combines confidential data from both parties. The proposed scheme is compatible with RTB and supports major characteristics of current ad systems such as real-time auction, fine-grained targeting, ads freshness and frequency capping. 

This is, to our knowledge, the first work that considers RTB in a privacy-preserving advertising solution. We note that RTB is increasingly prevalent and plays an important role in today's targeted advertising systems, itself exposing serious privacy concerns \cite{rtb}. Enhancing user privacy in RTB in general is therefore our goal in a near future. 


\section*{Appendix: Additive Homomorphic Encryption Scheme from \cite{ccastel}}
\label{enc}
\subsection*{Description}
The main idea of the scheme is to replace the xor (Exclusive-OR) operation typically found
in stream ciphers with modular addition (+).

\begin{table}
\label{tbl:scheme1}
\centering
\begin{tabular}{l}
\hline
  Additively Homomorphic Encryption Scheme  \\
\hline 
\\
\parbox{9cm}{\leavevmode%
	Encryption:
	\begin{enumerate}
		\item Represent message $m$ as integer $m \in [0, M - 1]$ where $M$ is a large integer
		\item Let $k$ be a randomly keystream, where $k \in [0, M - 1]$
		\item Compute $c = Enc(m, k, M) = m + k\ (mod\ M)$.
	\end{enumerate}
	Decryption:
	\begin{enumerate}
		\item $Dec(c, k, M) = c - k\ (mod\ M)$
	\end{enumerate}
	Addition of Ciphertexts:
	\begin{enumerate}
		\item Let $c_1 = Enc(m_1, k_1, M)$ and $c_2 = Enc(m_2, k_2, M)$
		\item For $k = k_1 + k_2, Dec(c_1 + c_2, k, M) = m_1 + m_2$
	\end{enumerate}
}
\end{tabular} 
\end{table}

Assume that $0 \leq m < M$. Due to the commutative property of addition, the above scheme is additively homomorphic. In fact, if $c_1 = Enc(m_1, k_1, M)$ and $c_2 = Enc(m_2, k_2, M)$ then $c_1 + c_2 = Enc(m_1 + m_2, k_1 + k_2, M)$. 

Note that if $n$ different ciphers $c_i$ are added, then M must be larger than $\sum\limits_{i=1}^nm_i$, otherwise correctness is not provided. In fact if $\sum\limits_{i=1}^nm_i$ is larger than $M$, decryption will result in a value $m$' that is smaller than $M$. In practice, if $p = max(m_i)$ then $M$ should be selected as $M = 2^{\log_2 (p*n)}$.

The keystream $k$ can be generated by using a stream cipher, such as RC4, generated from a private key.

\subsection*{Security Analysis}
This additive homomorphic encryption scheme is very similar to a xor-based stream cipher and its security can be proven using a similar proof.

The security relies on two important features: (1) the keystream changes from one message to another and (2) all the operations are performed modulo an integer $M$. These two features protect the scheme from frequency analysis attacks. In fact, it can be proven that the scheme is \textit{perfectly secure}.

\newtheorem{propo}{Theorem}
\begin{propo} 
The previous encryption scheme is perfectly secure.
\end{propo}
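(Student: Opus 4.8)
The plan is to reduce the claim to the classical Shannon argument for the one-time pad, transported from XOR over $\{0,1\}^\ell$ to modular addition over $\mathbb{Z}_M$. Perfect secrecy means that the posterior distribution of the plaintext given the ciphertext equals its prior, i.e. $\Pr[M=m \mid C=c] = \Pr[M=m]$ for every message $m$ and every ciphertext $c$; equivalently, that the ciphertext distribution is independent of the message. I would adopt the latter formulation as the working target. First I would make explicit the single assumption the whole proof rests on: for each encryption the keystream $k$ is drawn uniformly from $[0,M-1]$ and independently of the message $m$. This is exactly feature (1) in the security analysis (``the keystream changes from one message to another''), which guarantees a fresh uniform key per message.

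Next I would establish the core lemma: for any fixed $m$, the map $k \mapsto m + k \pmod M$ is a bijection on $\mathbb{Z}_M$, since it is inverted by subtraction of $m$ modulo $M$. Because a bijection sends the uniform distribution to the uniform distribution, the ciphertext $C = Enc(m,k,M)$ is uniformly distributed on $\mathbb{Z}_M$ for every fixed $m$. Concretely, $\Pr[C = c \mid M = m] = \Pr[k = c - m \bmod M] = 1/M$, the same value for all $m$.

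Finally I would close the argument with Bayes' rule. Writing $\Pr[C=c] = \sum_{m'} \Pr[C=c \mid M=m']\Pr[M=m'] = (1/M)\sum_{m'}\Pr[M=m'] = 1/M$, we obtain
$$\Pr[M=m \mid C=c] = \frac{\Pr[C=c\mid M=m]\,\Pr[M=m]}{\Pr[C=c]} = \frac{(1/M)\Pr[M=m]}{1/M} = \Pr[M=m],$$
which is perfect secrecy. The extension to the homomorphic sum of $n$ ciphertexts follows because the summed key $\sum_i k_i \bmod M$ is again uniform on $\mathbb{Z}_M$ whenever at least one summand is, so the bijection argument applies verbatim, provided $M$ exceeds $\sum_i m_i$ (the correctness condition already noted in the appendix).

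The main obstacle is not the algebra but the modelling gap it hides: the theorem is true only in the \emph{information-theoretic} idealization where $k$ is a genuinely uniform one-time pad. The scheme actually generates $k$ from a stream cipher such as RC4, which is merely pseudorandom, so the honest statement is that perfect secrecy holds under the one-time-pad abstraction and degrades to computational security once a PRG replaces the ideal keystream. I would flag this explicitly rather than let ``perfectly secure'' overclaim, and I would also note the necessary key-length condition: the key space must be at least as large as the message space, which here holds with equality since both are $\mathbb{Z}_M$.
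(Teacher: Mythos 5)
Your proof is correct, and it is the clean, textbook route: Shannon's one-time-pad argument transported from XOR to addition in $\mathbb{Z}_M$. The paper's own proof starts with the same opening move --- it writes $\Pr_{k}[Enc(k,m,M)=c] = \Pr_{k}[k = c-m \ (mod\ M)] = \Pr_{k}[k=k^*]$ and targets the formulation of perfect secrecy you note is equivalent, namely that $\Pr_k[Enc(k,m_1,M)=c]=\Pr_k[Enc(k,m_2,M)=c]$ for all $m_1,m_2,c$ --- but then it diverges from yours in how it evaluates $\Pr_k[k=k^*]$. You invoke uniformity of $k$ on $[0,M-1]$: the map $k\mapsto m+k \bmod M$ is a bijection, so $\Pr[k=k^*]=1/M$ for every $c$ and every $m$, the ciphertext is uniform, and Bayes' rule yields posterior $=$ prior. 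The paper instead drags the aggregation parameters into the proof ($l$-bit plaintexts, $M=2^{l+\log n}$) and asserts that $\Pr_k[k=k^*]$ equals $2/(2^l+M)$ or $1/(2^l+M)$ according to whether $c<2^l$ or $c>2^l$ --- a non-uniform ciphertext distribution that allegedly depends only on $c$, not on $m$. Your version is not merely different but sounder: under a uniform key those case-split values cannot be right (every $k^*$ has probability exactly $1/M$), and under any key distribution that \emph{would} produce them, $\Pr[k=k^*]$ would depend on $k^*=c-m \bmod M$ and hence on $m$, which is precisely what the conclusion must exclude; so the paper's middle step is internally inconsistent, while its overall skeleton (message-independence of the ciphertext distribution) is the same one you use. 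Your two closing caveats --- that perfect secrecy holds only in the ideal uniform-keystream model and degrades to computational security once RC4 replaces the pad, and that the key space must be at least as large as the message space --- do not appear in the paper's proof, which only remarks informally that the keystream ``can be generated by using a stream cipher, such as RC4''; both caveats are accurate and worth keeping.
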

\begin{proof}
For plaintext space $M$, keystream space $K$, let $\mathcal{K} = |M|$, $m \in [0; M - 1]$, $c \in [0; M - 1]$. Set $k^* = c - m\ (mod\ M)$. Then:
\begin{equation}
\begin{split}
\underset{k \leftarrow \mathcal{K}}{Prob}[Enc(k,m,M) = c] 
	&= \underset{k \leftarrow \mathcal{K}}{Prob}[k + m = c\ (mod\ M)] \\
	&= \underset{k \leftarrow \mathcal{K}}{Prob}[k = c - m\ (mod\ M)] \\
	&= \underset{k \leftarrow \mathcal{K}}{Prob}[k = k^*]
\end{split}
\end{equation}

If we assume that the maximum number of ciphertexts to be added is $n$ and that each plaintext is $l$-bit long, we must have $M = 2^{l+ \log(n)}$, i.e., $|M| = l + \log(n)$. If $c_i = (m_i + k_i)$, then the probability that $c_i \in [0, 2^l - 1]$ is twice the probability that $c_i \in [2^l;M - 1]$. More specifically, we have: $\underset{k \leftarrow \mathcal{K}}{Prob}[k = k^*] = 1/(2^l + M)$ if $c > 2^l$ and $\underset{k \leftarrow \mathcal{K}}{Prob}[k = k^*] = 2/(2^l + M)$ if $c < 2^l$.

Since these two equations hold for every $m \in \mathcal{M}$, it follows that for every $m_1,m_2 \in \mathcal{M}$ we have $\underset{k \leftarrow \mathcal{K}}{Prob}[Enc(k,m_1,M) = c] = \underset{k \leftarrow \mathcal{K}}{Prob}[Enc(k,m_2,M) = c]$ which establishes perfect security of the scheme.
\end{proof}

\end{document}